\documentclass[12pt]{article}
\usepackage{amsmath, amsthm, amssymb, amsfonts}
\usepackage{url}
\usepackage{caption}
\usepackage{subcaption}
\usepackage[all]{xy}
\usepackage{graphicx}
\usepackage{algorithm, algorithmicx}
\usepackage[noend]{algpseudocode}

\topmargin 0.0cm
\oddsidemargin 0.2cm
\textwidth 16cm 
\textheight 21cm
\footskip 1.0cm

\newtheorem{thm}{Theorem}[section]
\newtheorem{lem}[thm]{Lemma}

\theoremstyle{definition}
\newtheorem{defn}[thm]{Definition}
\theoremstyle{remark}
\newtheorem{ex}[thm]{Example}
\newtheorem{rem}[thm]{Remark}

\algrenewcommand\algorithmicrequire{\textbf{Precondition:}}
\algrenewcommand\algorithmicensure{\textbf{Postcondition:}}

\newcommand{\ord}{\operatorname{ord}}
\newcommand{\into}{\hookrightarrow}
\def\d{\partial} 
\def\od{\stackrel{\mathrm{def}}{=}}
\def\im{\operatorname{im}}
\def\ker{\operatorname{ker}}

\usepackage{multicol}
\usepackage{color}
\definecolor{orangered}{rgb}{0.85,.3,0}

\begin{document}

\begin{center}
{\Large Clique topology of real symmetric matrices}\\
Supplementary Text for:\\ {\it Clique topology reveals intrinsic geometric structure in neural correlations}\\
{Carina Curto, Chad Giusti, and Vladimir Itskov}\\
{Feb 21, 2015}
\end{center}

\tableofcontents

\bigskip

\section{Introduction}\label{S:intro}

The purpose of this supplement is to provide a more complete account of the mathematics underlying our analyses in the main text.  In particular, the {\it order complex} and {\it clique topology} are described more precisely here.  The order complex of a matrix is analogous to its Jordan Form, in that it captures features that are invariant under a certain type of matrix transformation.  Likewise, the clique topology of a matrix is analogous to its eigenvalue spectrum, in that it provides a set of invariants that can be used to detect structure.  While the Jordan Form and eigenvalue spectrum are invariant under linear  change of variables, the order complex and clique topology are invariant under monotonic transformations of the matrix entries.

Seeking quantities that are invariant under linear coordinate transformations is natural in physical applications, where measurements are often performed with respect to an arbitrary basis, such as the choice of $x$, $y$ and $z$ directions in physical space.  
In contrast, measurements in biological settings are often obtained as nonlinear (but monotonic) transformations of the underlying ``real'' variables, while the choice of basis is meaningful and fixed.  For example, basis elements might represent particular neurons or genes, and measurements (matrix elements) could consist of pairwise correlations in neural activity, or the co-expression of pairs of genes.  Unlike change of basis, these transformations are of the form
$$L_{ij} = f(M_{ij}),$$
where $f$ is a nonlinear, but monotonically increasing function that is applied to each entry of $M$.  The Jordan Form of a matrix, and its spectrum, may be badly distorted by such transformations; it also discards basis information which may be meaningful and should be preserved.  

Given a symmetric, $N \times N$ matrix that reflects correlations or similarities between $N$ entities (such as neurons, imaging voxels, etc.), we have two basic questions:
\begin{enumerate}
\item[Q1.] Is the matrix a monotonic transformation of a random or geometric\footnote{Recall from the main text that a {\it geometric} matrix refers to a matrix of (negative) Euclidean distances among random points in $\mathbb R^d$ .} matrix?   
\item[Q2.] Can we distinguish between these two possibilities, without knowing $f$? 
\end{enumerate}
Perhaps surprisingly, information sufficient to answer these questions is contained in the ordering of matrix entries, and is encoded in its {\it order complex}, to be described in the next section.  To extract the relevant features, we compute certain topological invariants of the order complex, which we refer to as the \emph{clique topology} of the matrix. The motivation for this choice stems from recent mathematical results by M. Kahle \cite{Kahle2009}, describing the clique topology of random symmetric matrices asymptotically (for large $N$); and our own computational results, showing that random and ``generic'' Euclidean distance matrices can be readily distinguished using clique topology for $N \sim 100$.

We have made an effort to keep these explanations self-contained, but details of how certain computations are performed have been left to the references for the sake of brevity.  Standard material from algebraic topology  \cite{Hatcher} is described in a minimal fashion, with an emphasis on homology of clique complexes.  The reader is expected to be familiar with linear algebra.

\subsubsection*{Comparison to prior applications in biology}

Topological data analysis has previously been used in biological applications to identify individual persistent cycles that may have meaningful interpretation \cite{gap, SinghMemoliIshkhanovSapiroCarlssonRingach2008, NicolauLevineCarlsson2011, DabaghianMemoliFrankCarlsson2012, ChanCarlssonRabadan2013, Chen2014}.  In contrast, our approach relies on the {\it statistical properties} of cycles, 
as captured by Betti curves, in order to detect geometric structure (or randomness) in symmetric matrices.
In particular, the relevant space from which the data points are sampled may not possess any meaningful persistent cycles, as in the square box environment covered by place fields.  The background Euclidean geometry, however, has a strong effect on the statistics of cycles, enabling detection of geometric structure and providing a sharp contrast to Betti curves of random matrices with i.i.d. entries.

\section{The order complex}


Recall that a function $f: \mathbb{R} \to \mathbb{R}$ is said to be {\em monotonically increasing} if $f(x) > f(y)$ whenever $x > y$.
Let $f: \mathbb{R} \to \mathbb{R}$ be a monotonically increasing function. For any real-valued matrix $M$, we define the matrix $f \cdot M$ by
\begin{equation*}
(f \cdot M)_{ij} = f(M_{ij}).   
\end{equation*}
 Note that this action preserves the ordering of matrix entries.  That is, if $L = f \cdot M$, then all pairs of off-diagonal entries, $(i,j)$ and $(k,\ell),$ satisfy:
 $$L_{ij} < L_{k\ell} \; \Leftrightarrow \; M_{ij} < M_{k\ell}.$$
 
 Equivalence classes of matrices can thus be represented by integer-valued matrices that record the ordering of off-diagonal entries (and carry no information on the diagonal).  Figure 1 shows three matrix orderings for $N = 5$.
  \begin{figure}[!h]\begin{center}
\includegraphics[width=3.5in]{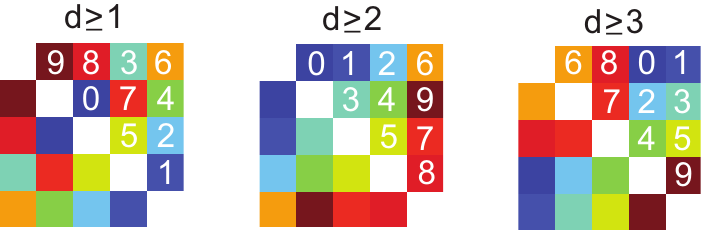}
\caption{Three matrix orderings, reproduced from Figure 2a in the main text.}
\end{center}
\vspace{-.2in}
\end{figure}
For a given symmetric matrix $M$, we denote the representative matrix ordering by $\widehat{M}$, where
 $$\widehat{M}_{ij} = |\{(k,\ell) \mid 0 < k < \ell \leq N \; \text{and} \; M_{k\ell} < M_{ij}\}|$$
 simply counts the number of upper-triangular entries of $M$ that are smaller than $M_{ij}$ for $i \neq j$, while the diagonal entries of $\widehat{M}$ are left undefined ($| \cdot |$ denotes the size of the set).  If $M_{ij}$ is the smallest off-diagonal entry, then $\widehat{M}_{ij} = 0$; if $M_{ij}$ is the largest matrix entry, and all upper-triangular entries are distinct, then $\widehat{M}_{ij} = {N \choose 2} -1.$   With this notation, we have:
  
 \begin{lem}\label{lemma:ordercomplex}
$\widehat{L} = \widehat{M}$ if and only if there exists a monotonically increasing function $f$ such that $L = f \cdot M$.
 \end{lem}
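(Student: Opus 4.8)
The plan is to prove both directions of the biconditional separately, treating the ordering matrix $\widehat{M}$ as a complete record of the relative order of off-diagonal entries. The key observation I would rely on is the one already noted in the excerpt: the action $L = f \cdot M$ by a monotonically increasing $f$ preserves the strict ordering of off-diagonal entries, i.e.\ $L_{ij} < L_{k\ell} \Leftrightarrow M_{ij} < M_{k\ell}$ for all off-diagonal index pairs. I will show that $\widehat{M}$ is nothing more than a convenient numerical encoding of this order relation, so that two matrices induce the same order if and only if their hatted versions coincide.

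For the reverse direction ($L = f \cdot M \Rightarrow \widehat{L} = \widehat{M}$) I would argue directly. Fix an off-diagonal pair $(i,j)$. By definition $\widehat{M}_{ij}$ counts the upper-triangular pairs $(k,\ell)$ with $M_{k\ell} < M_{ij}$, and $\widehat{L}_{ij}$ counts those with $L_{k\ell} < L_{ij}$. Since $f$ is monotonically increasing, $L_{k\ell} < L_{ij}$ holds exactly when $M_{k\ell} < M_{ij}$, so the two counting sets are literally the same set of index pairs; hence $\widehat{L}_{ij} = \widehat{M}_{ij}$. As $(i,j)$ was arbitrary, $\widehat{L} = \widehat{M}$. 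This direction is essentially immediate from the order-preservation property.

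The forward direction ($\widehat{L} = \widehat{M} \Rightarrow$ existence of $f$) is where the real work lies, and I expect the construction of $f$ to be the main obstacle. The plan is first to verify that the equality $\widehat{L} = \widehat{M}$ forces the two matrices to order their off-diagonal entries identically: I would check that for any two off-diagonal pairs, $M_{ij} < M_{k\ell} \Leftrightarrow \widehat{M}_{ij} < \widehat{M}_{k\ell}$ whenever the entries are distinct, and that ties ($M_{ij} = M_{k\ell}$) correspond exactly to equal hat-values, so that $\widehat{L} = \widehat{M}$ transfers the full order type from $M$ to $L$. With the orders matched, I would define $f$ on the finite set of distinct off-diagonal values of $M$ by sending each value $M_{ij}$ to the corresponding value $L_{ij}$; the order-matching guarantees this assignment is well defined (equal $M$-entries map to equal $L$-entries) and strictly order-preserving on that finite set. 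The remaining step is to extend this finite, strictly increasing partial map to a monotonically increasing function $f \colon \mathbb{R} \to \mathbb{R}$ defined on all of $\mathbb{R}$, which one can do by interpolating linearly between consecutive data points and extending affinely (with positive slope) beyond the extreme values. By construction $f(M_{ij}) = L_{ij}$ for every off-diagonal entry, so $L = f \cdot M$ off the diagonal; since the statement concerns only off-diagonal entries (the diagonal of $\widehat{\cdot}$ is undefined), this completes the proof.

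The subtle points to handle carefully are the treatment of repeated entries, ensuring the map from $M$-values to $L$-values is single-valued, and confirming that the interpolated extension is genuinely strictly increasing everywhere rather than merely nondecreasing. I would devote the bulk of the write-up to making the well-definedness and strict monotonicity of this extension explicit, since everything else follows mechanically from the order-preservation property established before the lemma.
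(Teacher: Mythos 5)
Your proof is correct and follows essentially the same route as the paper: the reverse direction by noting the counting sets defining $\widehat{L}_{ij}$ and $\widehat{M}_{ij}$ coincide under an order-preserving $f$, and the forward direction by defining $f(M_{ij}) = L_{ij}$ on the finite set of off-diagonal values and extending by monotone (linear) interpolation. You simply spell out the well-definedness and strict monotonicity checks that the paper leaves implicit.
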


\begin{proof}
$(\Leftarrow)$ is obvious, since the action of $f$ preserves the ordering of matrix entries.  $(\Rightarrow)$ One can construct $f:\mathbb{R} \to \mathbb{R}$ by setting $f(M_{ij}) = L_{ij}$ for each off-diagonal entry, and interpolating monotonically (e.g., linearly).  Since we assume $\widehat{L} = \widehat{M}$, this function is monotonically increasing and well-defined.
\end{proof}

 In order to analyze the information present in the ordering of entries for an $N \times N$ symmetric matrix, it is useful to represent it as a sequence of nested simple graphs. Recall that a \emph{simple graph} $G$ is a pair ($[N]$, $E$), where $[N] = \{1, 2, \dots , N\}$ is the ordered set of \emph{vertices}, and $E$ is the set of \emph{edges}.   Each edge is undirected and connects a unique pair of distinct vertices (no self-loops).
 We will use the notation $(ij) \in G$ to indicate that the edge corresponding to vertices $i,j$ is in the graph.

\vspace{.2in}
\begin{figure}[!h] 
\begin{center}
\includegraphics[width=6in]{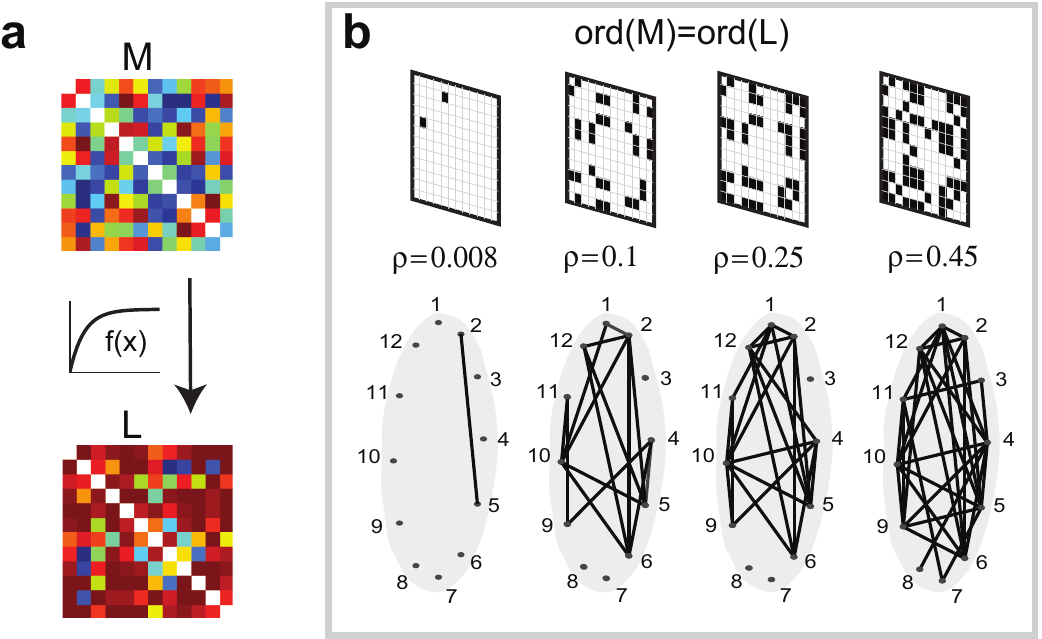}
\caption{Selected graphs in an order complex, adapted from Figure 1 in the main text.}
\end{center}
\vspace{-.2in}
\end{figure}

 \begin{defn}
 Let $M$ be a real symmetric matrix with matrix ordering $\widehat{M}$, and let $p = \max_{i < j} \widehat{M}_{ij}$.  The \emph{order complex} of $M$, denoted $\ord(M)$, is the sequence of graphs 
 $$G_0 \subset G_1 \subset \cdots \subset G_{p+1},$$ 
 such that 
 \begin{equation*}
 (ij) \in G_r \;\; \Leftrightarrow \;\; \widehat{M}_{ij} > p - r \;\; \text{for each} \;\; r = 0, \ldots, p+1.
 \end{equation*}
 \end{defn}
Note that $G_0$ has no edges, $G_1$ contains only the edge $(ij)$ corresponding to the largest off-diagonal entry of $M$, and subsequent graphs are each obtained from the previous one by adding an additional edge for each next-largest entry until we reach the complete graph, $G_{p+1}$.  A portion of an order complex is illustrated in Figure 2.  It is clear from the definition that:
 $$\ord(L) = \ord(M)\; \Leftrightarrow\; \widehat{L} = \widehat{M}.$$

Because of Lemma \ref{lemma:ordercomplex}, the order complex $\ord(M)$ captures all features of $M$  that are preserved under the action of monotonically increasing functions.

\section{Clique topology}

We are now ready to introduce {\em clique topology}, a tool for extracting invariant features of a matrix from the ordering of matrix entries.
We begin by describing the clique topology of a single graph $G$, by which we simply mean the homology of its {\it clique complex}:
$$H_i(X(G), \mathbf{k}),$$
where $\mathbf{k}$ is a field (more on the field in section~\ref{sec:chains}).
The clique complex, $X(G),$ is defined in section~\ref{sec:clique-complex}; while the simplicial homology groups, $H_i(X(G), \mathbf{k}),$ will be defined in section~\ref{sec:homology}.  We refer to these invariants as \emph{clique} topology in order to indicate that we are measuring topological features of the organization of cliques in the graph, rather than the usual topology of the graph.

We summarize the information present in clique topology via a set of {\em Betti numbers}, $\beta_i(X(G)),$ which are the ranks of the corresponding homology groups:
$$\beta_i(X(G)) \od \mathrm{rank}\; H_i(X(G), \mathbf{k}).$$
The clique topology of a symmetric matrix $M,$ with order complex $G_0 \subset G_1 \subset \cdots \subset G_{p+1}$,
is reflected in the sequences of Betti numbers $\beta_i(X(G_r))$, computed for various dimensions $i = 0, 1, 2, \ldots,$ and for each graph $G_r$ in $\ord(M)$ (see section~\ref{sec:clique-top-across}).

The reader familiar with homology of simplicial complexes, including clique complexes, should feel free to skip the next few sections and proceed directly to section~\ref{sec:clique-top-across}, where we define {\it Betti curves}.

\subsection{The clique complex of a graph}\label{sec:clique-complex}

Recall that a \emph{clique} in a graph $G$ is an all-to-all connected collection of vertices in $G$. An \emph{$m$-clique} is a clique consisting of $m$ vertices. Note that if $\sigma$ is a clique of $G$, then all subsets of $\sigma$ are also cliques. 

 \begin{defn}
 Let $G$ be a graph with $N$ vertices. The \emph{clique complex} of $G$, denoted $X(G)$, is the set of all cliques of $G$:
 $$X(G) = \{\sigma \subset [N] \mid \sigma \text{ is a clique of } G\}.$$
 We write $X_m(G)$ for the set of $(m+1)$-cliques of $G$.
  \end{defn}
  
 The shift in index reflects the ``dimension'' of a clique, when the clique complex is represented geometrically. If we think of the vertices of the graph $G$ as embedded generically in a high-dimensional space, each clique represents the simplex given by the convex hull of its vertices. For example, the convex hull of two vertices is a 1-dimensional edge, for three vertices we obtain a 2-dimensional triangle, and four vertices yields a 3-dimensional tetrahedron.  Thus, cliques in $X_m(G)$ consist of $m+1$ vertices, but represent $m$-dimensional simplices.  
 
 The \emph{boundary} of a clique $\sigma \subseteq G$ is the collection of subcliques $\tau \subset \sigma$ which have one fewer vertex.  This corresponds to the set of lower-dimensional simplices that comprise the boundary of the simplex defined by $\sigma$ (Figure 3b).
  
The \emph{homology} of a clique complex $X(G)$, to be defined in section~\ref{sec:homology}, is a measurement of relationships among the cliques in $G$. Intuitively, homology counts \emph{cycles} in the clique complex, a higher-dimensional generalization of the notion of cycles in a graph (Figure 3a). A collection of cliques forms a cycle if their boundaries overlap so as to ``cancel'' one another (Figure 3b). We also wish avoid double-counting cycles which are in our geometric sense equivalent. In particular, two cycles are considered equivalent if one can be deformed into the other without leaving the clique complex (Figure 3c).  Alternatively, if we can ``combine'' two cycles to form a third (Figure 3d), we should not detect their concatenation as a new independent cycle. 

\begin{figure}[!h]
\begin{center}
\includegraphics[width=5.6in]{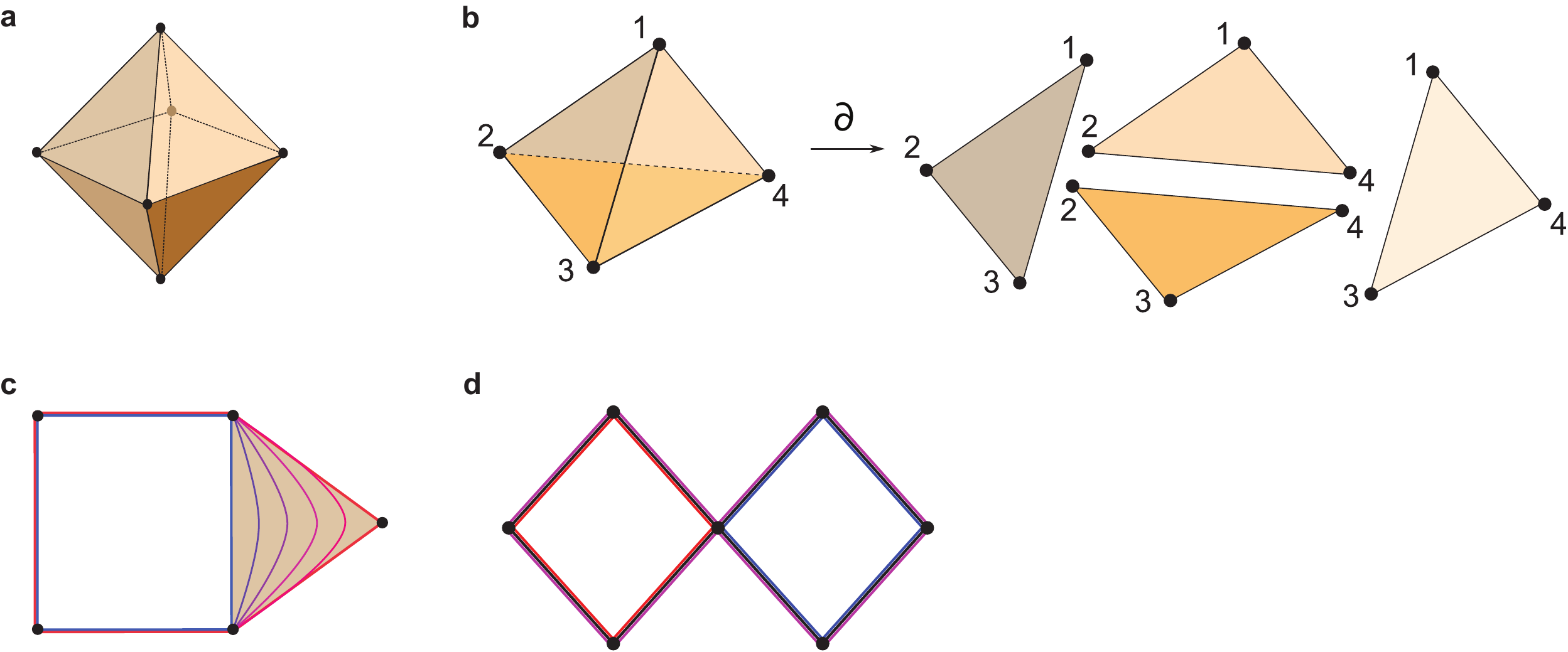}
\caption{Illustrations of homology ideas.}
\end{center}
\end{figure}

 \subsection{Chains and boundaries}\label{sec:chains}

To make the above notion of ``cancellation'' of boundaries precise (and computable), one introduces linear combinations of cliques, called \emph{chains}. 
 Given a set of cliques $\sigma_1,\ldots,\sigma_\ell \in X(G)$, one can form a vector space consisting of formal linear
combinations of cliques with coefficients in a field $\mathbf{k}$:
$$\sum_{i = 1}^\ell a_i c_{\sigma_i}, \;\;\; \text{where}\;\;  a_i \in \mathbf{k},$$
and $c_{\sigma_i}$ denotes the basis element corresponding to the clique $\sigma_i$.  
To define chain groups, one considers linear combinations of cliques of the same size.
Recall that $X_m(G)$ denotes the set of $(m+1)$-cliques of $G$.   
 
  \begin{defn}
The \emph{$m$-th chain group} of $X(G),$ with coefficients in $\mathbf{k},$ is the $\mathbf{k}$-vector space: 
$$C_m(X(G); \mathbf{k}) \od \left\{ \sum_{i = 1}^\ell a_i c_{\sigma_i} \mid \sigma_i \in X_m(G) \text{ and } a_i \in \mathbf{k} \text{ for each } i = 1,\ldots,\ell \right\}.$$
  \end{defn}
  \noindent As we will always be working with coefficients in an aribitrary field $\mathbf{k}$,\footnote{
For readers uncomfortable with the notion of a general field $\mathbf{k}$, it is relatively harmless to substitute $\mathbb{R}$ or $\mathbb{Q}$ for $\mathbf{k}$ for the remainder of the discussion. One should keep in mind, however, that the actual computations typically take place with $\mathbf{k}$ a finite field, $\mathbb{Z}/p\mathbb{Z}$.  This can have an effect on the result: in such a field, one can add a boundary to itself a finite number of times and get zero, creating ``extra'' cycles -- called {\em torsion} cycles -- that would not be present over $\mathbb{R}$. These extra cycles measure aspects of the clique complex that are not relevant to our purposes.  In our software we have chosen the field to be $\mathbb{Z}/2\mathbb{Z},$ but this choice is somewhat arbitrary and not important.  So long as all computations are done using the same field, comparing the resulting homology groups across different graphs is entirely valid.
}
 we will omit it from the notation and write $C_m(X(G))$ instead of $C_m(X(G); \mathbf{k})$.Note that $C_0(X(G))$ consists of formal linear combinations of $1$-cliques (vertices), $C_1(X(G))$ of $2$-cliques (edges), $C_2(X(G))$ of $3$-cliques (triangles), and so on.  

 The boundaries of cliques can also be described algebraically, allowing this notion to be extended to chains.
  If $\sigma = \{v_i\}_{i=0}^m$ is an $m$-clique of $G$, we use the notation 
  $$c_\sigma = c_{v_0v_1\dots v_m},$$ 
  where $v_0 < v_1 < \ldots < v_m$ (recall that each $v_i \in [N]$).  Consistent ordering is important because it affects the signs in the boundary map.  Given a sequence of vertices $v_0v_1\dots v_m$, we denote by $v_0v_1\dots \hat{v}_i \dots v_m$ the sequence obtained by omitting the element $v_i$.  Note that for each $\sigma \in X_m(G)$, the element $c_\sigma = c_{v_0v_1\dots v_m}$ is a basis element of the vector space $C_m(X(G))$.

  \begin{defn}\label{D:boundary}
  The \emph{boundary} map $\d_m : C_m(X(G)) \to C_{m-1}(X(G))$, for $m > 0,$ is given on basis elements $c_{v_0v_1 \dots v_m}$ by
  \begin{equation*}
  \d_m(c_{v_0v_1 \dots v_m}) = \sum_{i=0}^m (-1)^{i}c_{v_0v_1\dots \hat{v}_i\dots v_m},
  \end{equation*}  
  and is extended via linearity to general chains; i.e. $\d_m(\sum_j a_jc_{\sigma_j}) = \sum a_j \d_m(c_{\sigma_j})$. The map $\d_0$ is defined to be the zero map.
  \end{defn}
  
Recall that in the geometric picture, an $(m+1)$-clique corresponds to a $m$-dimensional simplex, and the boundary of this simplex is the set of $m$-cliques comprising its $(m-1)$-dimensional {\it facets} -- that is, all subcliques on one fewer vertex.
We have thus defined the boundary of a chain in $C_m(X(G))$ in a fashion consistent with our geometric understanding: as a formal sum of chains in $C_{m-1}(X(G))$, corresponding to simplices that are one dimension lower (see Figure 3b).  Note that signs are assigned to the elements of this formal sum to indicate the \emph{orientation} of cliques, which will be critical for obtaining the desired ``cancellation'' of boundaries  (see Remark \ref{R:orientation} for details). 

\begin{ex} Suppose $\sigma, \tau \in X_2(G)$ are cliques on vertices $\{1,2,3\}$ and $\{1, 2, 4\}$ respectively. The boundary of the 2-chain $c_\sigma - c_\tau \in C_2(X(G))$ is
\begin{eqnarray*}
\d_2(c_{123} - c_{124}) &=& \d_2(c_{123}) - \d_2(c_{124}) \\
&=& \left(c_{23} - c_{13} + c_{12}\right) - \left(c_{24} - c_{14} + c_{12}\right) \\
&=& c_{23} - c_{13} - c_{24} + c_{14}.
\end{eqnarray*}
The cancellation of $c_{12}$ reflects the fact that the clique $\{1,2\}$ appears twice in the boundary of $c_\sigma - c_\tau$, with \emph{opposite} orientation.
Note also that applying $\d_1$ to the resulting $1$-chain yields
\begin{eqnarray*}
\d_1(\d_2(c_{123} - c_{124})) &=& \d_1(c_{23} - c_{13} - c_{24} + c_{14} )\\
&=& (c_3 - c_2) - (c_3 - c_1) - (c_4 - c_2) + (c_4 - c_1)\\
&=& 0.
\end{eqnarray*}
\end{ex}
In fact, it is straightforward to check from the definition that the composition of two subsequent boundary maps always yields $0$.  In other words,

\begin{lem} \label{L:chaincx} 
For any $m>0$, $\d_{m} \circ \d_{m+1} = 0$.
\end{lem}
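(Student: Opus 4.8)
The plan is to prove the identity $\d_m \circ \d_{m+1} = 0$ by a direct computation on basis elements, since both maps are linear and it therefore suffices to verify the vanishing on each generator $c_{v_0 v_1 \dots v_{m+1}}$ of $C_{m+1}(X(G))$. First I would apply the definition of $\d_{m+1}$ to expand
\begin{equation*}
\d_{m+1}(c_{v_0 \dots v_{m+1}}) = \sum_{i=0}^{m+1} (-1)^i\, c_{v_0 \dots \hat{v}_i \dots v_{m+1}},
\end{equation*}
and then apply $\d_m$ to each term, again by the definition of the boundary map. This yields a double sum in which each summand is a basis element obtained from the original clique by deleting two of its vertices, carrying a sign that is a product of two powers of $-1$.

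The heart of the argument is a bookkeeping step on these signs. After deleting a vertex $v_i$ and then a second vertex, one must track how the index of the second deleted vertex shifts depending on whether it originally sat before or after $v_i$ in the ordering. Concretely, for a fixed pair of indices $j < i$, the clique $c_{v_0 \dots \hat{v}_j \dots \hat{v}_i \dots v_{m+1}}$ arises exactly twice in the double sum: once by first deleting $v_i$ (outer sign $(-1)^i$) and then deleting $v_j$, which now sits at position $j$ in the shortened sequence (inner sign $(-1)^j$); and once by first deleting $v_j$ (outer sign $(-1)^j$) and then deleting $v_i$, which has been shifted down to position $i-1$ because $v_j$ was removed from in front of it (inner sign $(-1)^{i-1}$). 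The two contributions are $(-1)^{i+j}$ and $(-1)^{j+i-1}$, which are negatives of one another, so they cancel.

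Carrying this out for every unordered pair of deleted vertices shows that the entire double sum collapses to zero, which establishes $\d_m(\d_{m+1}(c_{v_0 \dots v_{m+1}})) = 0$ on each basis element, and hence $\d_m \circ \d_{m+1} = 0$ by linearity. The main obstacle is purely notational: it is the index shift in the inner deletion that forces the parity flip, and one has to be careful to record whether the vertex deleted second was originally to the left or to the right of the vertex deleted first. I would either write out the two cases explicitly or introduce a clean reindexing to make the pairwise cancellation transparent, rather than attempting to manipulate the summation bounds all at once.
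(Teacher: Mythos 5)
Your proof is correct and is exactly the direct computation the paper has in mind: the paper merely asserts that the identity is ``straightforward to check from the definition'' (illustrating the cancellation only in the example with $c_{123}-c_{124}$), and your pairwise cancellation argument, with the index shift producing the sign flip $(-1)^{i+j}$ versus $(-1)^{i+j-1}$, is the standard way to carry out that check. No gaps.
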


\begin{rem}\label{R:orientation}
The \emph{orientation} of cliques can be {\it positive} or {\it negative}. The vertices of a clique $c_{v_0v_1 \dots v_m} \in X_m(G)$  have a \emph{canonical} ordering induced by the usual ordering of the vertices $[N]$ of $G$. We define the canonical ordering to have positive orientation for each clique. Any other ordering can be obtained as a permutation of the canonical ordering, and the resulting ordering is positive or negative according to the sign of the permutation. For example, $c_{124}$ has a positive orientation, while $c_{214}$ is negatively oriented.  When we compute the boundary of a clique $c_\sigma$ in Definition \ref{D:boundary}, the signs arise as a result of the \emph{induced orientation} on the boundary cliques.  The result of taking all cliques on the boundary is the signed sum we obtain in Definition \ref{D:boundary}. \end{rem}

 \subsection{Homology of a clique complex}\label{sec:homology}

 For a given graph $G$, the chain groups $C_m(X(G))$ can be strung together to form a {\em chain complex}:
 \begin{equation*}
  \xymatrix{
  0 \ar[r]^-{\d_{k+1} = 0}& C_k(X(G)) \ar[r]^-{\d_{k}}& C_{k-1}(X(G)) \ar[r]^-{\d_{k-1}}& \dots \ar[r]^-{\d_{2}} & C_1(X(G))\ar[r]^-{\d_{1}}&C_0(X(G))\ar[r]^-{\d_0 = 0}&0,
  }
  \end{equation*}
The zeroes at either end of the complex represent the zero-dimensional $\mathbf{k}$-vector space, and the maps at each end are necessarily the zero map.  
 
    If a chain is in the kernel of the boundary map, it is because the (oriented) boundaries of its constituent cliques cancel one another. This is precisely the desired notion of a \emph{cycle}, so the set of $m$-cycles is exactly $\text{ker}(\d_m)$; in particular, $1$-cycles correspond to the usual notion of cycles in a graph.   Note also that any chain in $C_m(X(G))$ which forms the boundary of a clique in $X_{m+1}(G)$ is itself a cycle, so its own boundary should be zero.   This is reflected in the fact that $\d_m \circ \d_{m+1} = 0$ (Lemma~\ref{L:chaincx}).  In particular, 
    $$\im \d_{m+1} \subset \ker \d_m.$$
  
When we are counting cycles for homology, we do not want to consider those which arise as boundaries of chains, as these are ``filled in."  For example, the two clique complexes in Figure 4 should have the same number of homology $1$-cycles.  In Figure 4b, we do not wish to count the chain $c_{23}+c_{35}-c_{25} \in C_1(X(G))$ as a $1$-cycle because it is the boundary of a clique, $c_{235} \in C_2(X(G))$.

\begin{figure}[!h]
\begin{center}
\includegraphics[width=3.5in]{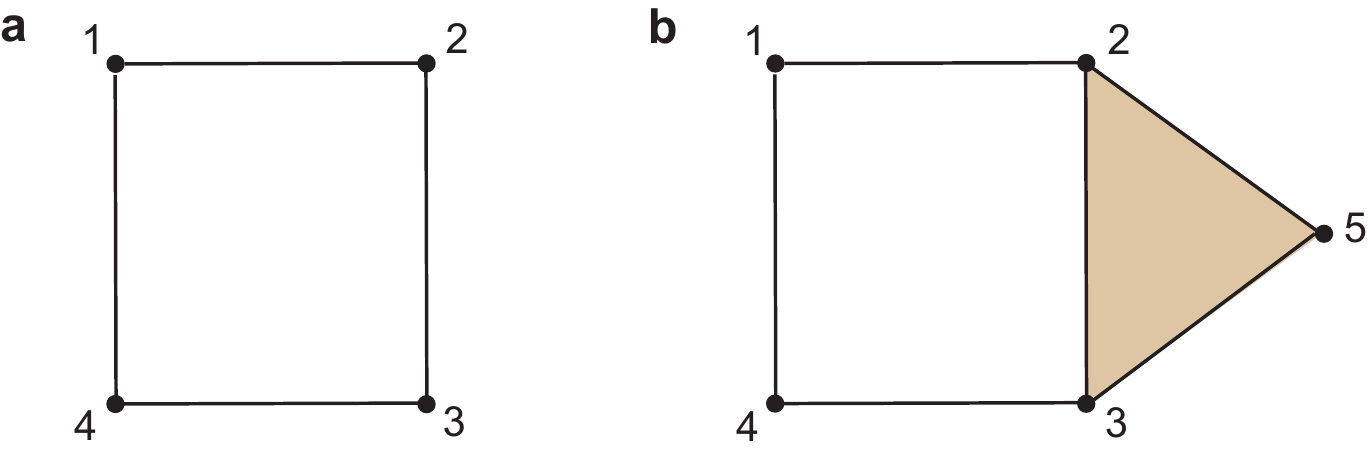}
\caption{Two clique complexes for graphs on 4 and 5 vertices.}
\end{center}
\end{figure}

\noindent In order to eliminate cycles that are boundaries of higher-dimensional cliques, one computes quotient vector spaces, $\ker(\d_m)/\im(\d_{m+1}).$
  
  \begin{defn}
  The $m$-th {\em homology group} of $X(G)$ with coefficients in $\mathbf{k}$ is the quotient space 
  \begin{equation*}
  H_m(X(G); \mathbf{k}) \od \frac{\text{ker}(\d_m)}{\text{im}(\d_{m+1})}.
  \end{equation*}
  \end{defn}
  
\noindent As with chain groups, we will omit the field from our notation and write simply $H_m(X(G))$. 
  
Observe that the zeroth homology group is special: since $\d_0 = 0$, its kernel is always $C_0(X(G))$. The quotient $\ker(\d_0)/\im(\d_{1})$ thus identifies vertices which are connected to one another, so that $H_0(X(G))$ is a vector space whose basis can be chosen to correspond to the \emph{connected components} of $G$.

\begin{ex} \label{E:square} Let $G$ be the graph on four vertices in Figure 4a.   The kernel of the boundary map $\d_1: C_1(X(G)) \to C_0(X(G))$ is the one-dimensional space spanned by $\sigma = c_{12} + c_{23} + c_{34} - c_{14}$.  Indeed, $\d_1(\sigma) = (c_2-c_1)+(c_3-c_2)+(c_4-c_3)-(c_4-c_1) = 0.$  Since there are no cliques of size greater than $2$, $C_2(X(G)) = 0$ and hence $\d_2 = 0$.   It follows that $H_1(X(G))$ is precisely the one-dimensional vector space spanned by $\sigma$.  Furthermore, since $C_1(X(G))$ has dimension $4$ and $\ker \d_1$ has dimension $1$, it follows that $\im \d_1$ has dimension 3.  We can thus deduce that $H_0(X(G))$ is also one-dimensional, consistent with the fact that $G$ has just one connected component.

Next, consider the graph $G'$ on five vertices in Figure 4b. This graph has been obtained from $G$ by ``attaching'' the clique $\{2,3,5\}$. The kernel of $\d_1$ is now 2-dimensional, and is spanned by both $\sigma$ and a new cycle, $\tau = c_{23}+c_{35}-c_{25}$.  However, $\tau \in \im \d_2$, so we find that $H_1(X(G'))$ continues to be one-dimensional, consistent with our intuition that $G$ and $G'$ both have just one cycle that has not been ``filled in'' by cliques.
\end{ex}

\begin{figure}[!h]
\begin{center}
\includegraphics[width=5in]{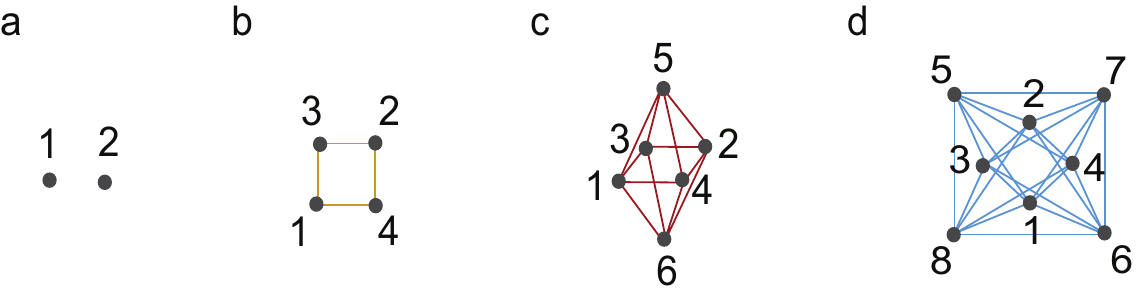}
\caption{Cross-polytopes generate the minimal clique complexes which produce homology in each dimension.  Adapted from Figure 1 of the main text.}
\end{center}
\end{figure}
\vspace{-.2in}

\begin{ex} \label{E:mincycle} The smallest example of a graph $G_m$ whose clique complex has non-trivial $m$-th homology group is the 1-skeleton of the $(m+1)$-dimensional cross-polytope (Figure 5). Such a graph can be built inductively starting from the graph $G_0$ (Figure 5a), having just two vertices and no edges.
To obtain $G_1$ from $G_0$, we attach two new vertices and include all edges between the new vertices and the vertices of $G_0$ (Figure 5b).  More generally, 
to obtain $G_i$ from $G_{i-1}$ we attach two new vertices and all edges between these new vertices and those of $G_{i-1}$. Thus, we obtain $G_2$ (Figure 5c) and $G_3$ (Figure 5d), which give minimal examples of graphs whose clique complexes have a non-trivial homology $2$-cycle and $3$-cycle, respectively.
\end{ex}

 A useful characterization of the clique topology of a graph is obtained by simply tracking the dimensions of the homology groups.   This is done via the so-called Betti numbers.
  
  \begin{defn}
  The \emph{$m$-th Betti number} of $X(G)$, denoted $\beta_m(X(G))$, is the rank of $H_m(X(G); \mathbf{k})$ as a $\mathbf{k}$-vector space.
  \end{defn}
      
  While this information discards the identities of individual cycles, it is well-suited to statistical methods as it reduces the clique topology of a graph to a sequence of integers. 

\subsection{Clique topology across the order complex}\label{sec:clique-top-across}
  
We now turn our attention to the clique topology of all graphs in the order complex at once. For a matrix $M$, the Betti numbers of the graphs in $\ord(M)$ are collected as follows.
  
  \begin{defn}
  Let $M$ be a real symmetric matrix and $\ord(M) = (G_0 \subset G_1 \subset G_2 \subset \cdots \subset G_{p+1})$ its order complex, where $p = \max_{i < j} \widehat{M}_{ij}$. The \emph{$m$-th Betti curve} of $M$ is the sequence of numbers $\{\beta_m(\rho_r)\}_{r=1}^{p+1},$ where $\rho_r$ is the edge density of the graph $G_r$, and 
  $$\beta_m(\rho_r) \od \operatorname{rank} H_m(X(G_r)).$$ 
  As the matrix $M$ will be clear from context, we omit it from the notation.
  \end{defn}
  
  While each Betti curve is a discrete sequence, we can think of it as being a piecewise constant function.   
 To simplify comparison, we consider as a summary statistic the integral of the entire Betti curve.  We call this the \emph{$m$-th total Betti number} of the matrix $M$, given by
 \begin{equation*}
 \bar{\beta}_m (M) \od \sum_{r=1}^{p+1}\beta_m(\rho_r) \Delta \rho_r = \int_{0}^{1} \beta_m(\rho) \;\text{d}\rho, 
 \end{equation*}
 where $\Delta\rho_r$ is the change in edge density between $G_r$ and $G_{r-1}$.\footnote{This measurement, $\bar{\beta}_m(M)$, also appears as the first element in the basis for the ring of algebraic functions on the collection of all persistence structures described in \cite{AdcockCarlssonCarlsson}.}  Typically,  $\Delta \rho_r = 1/{N \choose 2}$, which is the change in density after adding a single edge.  As we will see, the $\bar{\beta}_m$ alone can distinguish between a random symmetric matrix, drawn from a distribution with i.i.d. entries, and a geometric matrix, which arises from distances between a set of randomly-distributed points in Euclidean space.  Thus, we can use the total Betti number to test the hypotheses that a matrix is random or geometric.

\section{Clique topology of random and geometric matrices}\label{S:families}

In order to interpret the results of computing clique topology for matrices of interest, we need suitable null models for comparison.   This brings us back to our motivating questions Q1 and Q2 from section~\ref{S:intro}.  Can we use clique topology to reject the hypothesis that a given matrix is random or geometric?  This will be possible if matrices in these categories have stereotyped Betti curves.  In this case, 
it can be shown that a matrix with a substantially different Betti curve is unlikely to have come from the given null model distribution, and a $p$-value can be assigned to quantify the significance.

Because clique topology depends only on $\ord(M)$, it suffices to describe the distributions of order complexes we obtain for random and geometric matrices.  In both families, the details of the Betti curves change with $N$; however, we find that their large-scale features are robust once $N > 50$.  This means Betti curves can indeed be used to reject these models.
  
The distribution of \emph{random order complexes} arises by sampling a matrix ordering $\widehat{M}$ from the uniform distribution on all such orderings.  For $N \times N$ symmetric matrices with distinct entries $\{M_{ij}\}_{i<j}$, this can be achieved by sampling permutations of $\{0,\ldots,{N \choose 2}-1\}$ uniformly at random.  Equivalently, the matrix can be chosen with i.i.d. entries drawn from any continuous distribution, or by shuffling the elements of a given matrix with distinct off-diagonal entries. Thus, in a graph $G_\rho$ of $\ord(M)$, each edge has independent probability $\rho$ of appearing.  In other words, the graphs in the order complex are a nested family of Erd\"os-R\'enyi random graphs. The clique topology of such complexes is relatively well understood from a theoretical perspective \cite{Kahle2009}, with highly stereotyped, unimodal Betti curves as illustrated in Figure 6a.

\begin{figure}[!h]
\begin{center}
\includegraphics[width=6in]{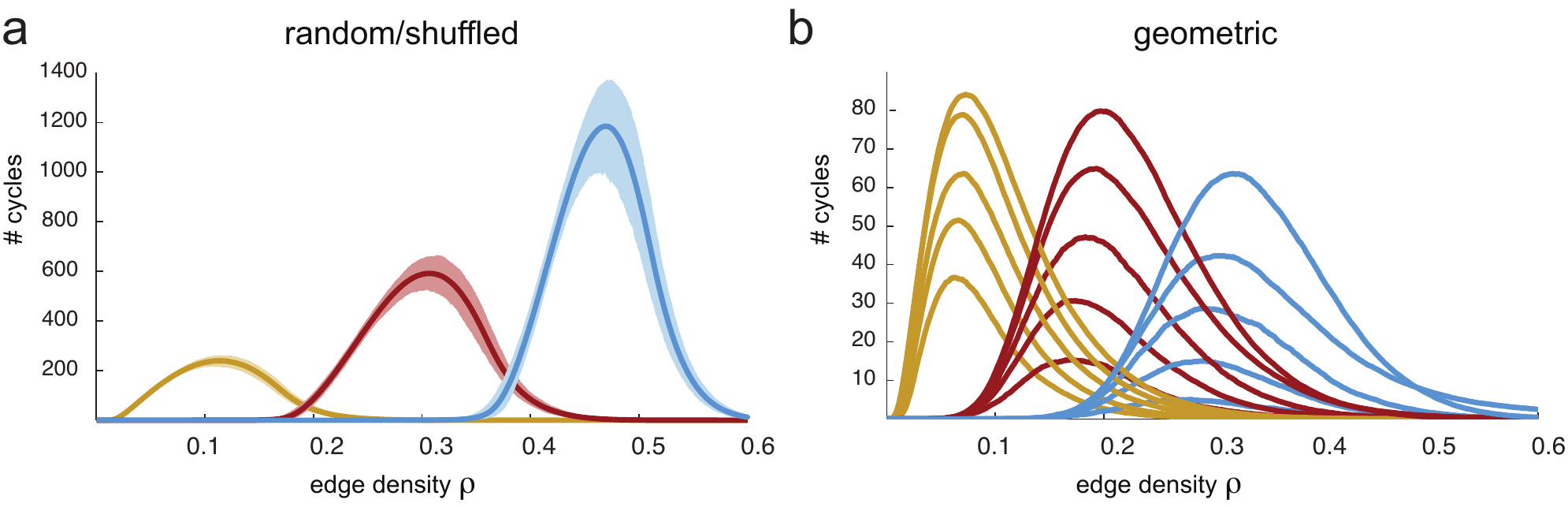}
\caption{Betti curves for random and geometric matrices.  (a) N = 100, and means for $\beta_1(\rho)$ (yellow), $\beta_2(\rho)$ (red), and $\beta_3(\rho)$ (blue) are displayed with bold lines, while shading indicates 99.5\% confidence intervals. (b) N=100, and average Betti curves are displayed for dimensions $d=10, 50, 100, 1000, 10000,$ in increasing order (i.e., higher curves correspond to larger dimensions).}
  \end{center}
  \end{figure}
 \vspace{-.2in}

A \emph{geometric order complex} is one arising from the negative distance matrix of a collection of points embedded in some Euclidean space.  We choose {\it negative} distance matrices so that the highest matrix values correspond to the nearest distances; this is consistent with the intuition that correlations should decrease with distance, as described in the main text.  Sampling such a complex consists of sampling $N$ i.i.d. points, $\{p_i\}$,  from some distribution on $\mathbb{R}^d$.  The associated sequence of clique complexes, $X(G_0) \subset X(G_1) \subset \ldots \subset X(G_{p+1}),$ corresponding to a geometric order complex 
is also referred to as the Vietoris-Rips complex of the underlying points.  

These complexes have been heavily studied in cases where the points are presumably sampled from an underlying manifold \cite{CollinsZCG04}. In our setting, however, we sample points from the uniform distribution on the unit cube in $\mathbb{R}^d$ for $d \leq N.$  To our knowledge, the Betti curves of geometric order complexes are largely unstudied. Our numerical experiments show that they are highly stereotyped (Figure 6b), irrespective of $d$ for a large range of dimensions.\footnote{We observed similar Betti curves to those in Figure 6b for values of $d$ that were orders of magnitude larger than $N$.  Nevertheless, there is some evidence to indicate that Betti curves will approach those of random order complexes as $d \rightarrow \infty$.}
Moreover, they are roughly an order of magnitude smaller at the peak than the Betti curves of random order complexes with matching $N$, and the peak values {\it decrease} rather than increase as we move between $\beta_1(\rho)$ to $\beta_2(\rho)$ and $\beta_3(\rho)$.
The differences between the Betti curves of random and geometric matrices can also be understood through the lens of \emph{persistence lifetimes}, which we will describe in section \ref{S:persistence}.

\subsection{Dimension of geometric order complexes}
 
Any matrix ordering $\widehat{M}$ appears with equal probability in the distribution of random symmetric matrices with i.i.d. entries.   The consistency of the Betti curves in Figure 6a indicates that ``most'' of these matrix orderings have a similar organization of cliques.  For geometric matrices, the possible matrix orderings are sampled in a highly non-uniform manner, leading to dramatically different Betti curves.
Despite this, it is worth noting that any matrix ordering can in fact arise from a distance matrix.

\begin{defn} A set of points $p_1,\ldots,p_N\in \mathbb{R}^d$ is called a {\it geometric realization} of the matrix ordering $\widehat{M}$ if the distance matrix $D_{ij}=||p_i-p_j||$ has $\widehat{D} = \widehat{M}$.
\end{defn}

Note that for each collection of three or more points, the (higher) triangle inequalities implied by the metric impose strong constraints on $\widehat{M}$. 
This means that for most matrix orderings, the probability of sampling a point configuration in the unit cube that yields a geometric realization of $\widehat{M}$ is vanishingly small.  This is why geometric Betti curves are, on average, so different from those of random matrices.  Nevertheless, geometric realizations do always exist, provided $d \geq N-1$. 

\begin{lem} Every $N\times N$ matrix ordering $\widehat{M}$ that has ${N \choose 2}$ distinct off-diagonal entries possesses  a geometric realization in $(N-1)$-dimensional Euclidean space. Moreover, this realization can be chosen as 
\begin{equation*} p_i=\frac1{\sqrt{2}}\left(\vec  e_i  - \dfrac{\varepsilon}{2} \sum_{j=1}^N M_{ij}\vec  e_j\right),
\end{equation*}
for small enough $\varepsilon>0,$ where $M$ is any symmetric matrix with ordering $\widehat{M}$ and zeroes on the diagonal, and
 $\{\vec  e_i\}_{i=1}^N$ is the standard orthonormal basis in $\mathbb R^N$.
\end{lem}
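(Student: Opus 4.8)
The plan is to compute the pairwise squared distances $\|p_i-p_k\|^2$ explicitly as a function of $\varepsilon$ and show that, to first order in $\varepsilon$, they form an affine \emph{increasing} function of $M_{ik}$; the higher-order terms can then be absorbed by taking $\varepsilon$ small enough. The dimension claim is essentially free: any $N$ points lie in an affine subspace of dimension at most $N-1$, so once the distances are realized by the $p_i\in\mathbb R^N$ we may isometrically re-embed their affine hull into $\mathbb R^{N-1}$. (For small $\varepsilon$ the $p_i$ are a small perturbation of the affinely independent points $\tfrac1{\sqrt2}\vec e_i$, so the affine hull is in fact exactly $(N-1)$-dimensional, but only the bound $\le N-1$ is needed.)

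The heart of the argument is the expansion of the squared distance. Writing $p_i-p_k=\tfrac1{\sqrt2}\bigl((\vec e_i-\vec e_k)-\tfrac{\varepsilon}{2}w_{ik}\bigr)$ with $w_{ik}=\sum_{j}(M_{ij}-M_{kj})\vec e_j$, I would expand
\begin{equation*}
\|p_i-p_k\|^2=\tfrac12\Bigl(\|\vec e_i-\vec e_k\|^2-\varepsilon\,\langle \vec e_i-\vec e_k,\,w_{ik}\rangle+\tfrac{\varepsilon^2}{4}\|w_{ik}\|^2\Bigr).
\end{equation*}
Here $\|\vec e_i-\vec e_k\|^2=2$, and the cross term evaluates, using that $M$ is symmetric with zero diagonal, to $\langle \vec e_i-\vec e_k,\,w_{ik}\rangle = M_{ii}+M_{kk}-M_{ik}-M_{ki}=-2M_{ik}$. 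Substituting yields
\begin{equation*}
\|p_i-p_k\|^2 = 1+\varepsilon\,M_{ik}+\tfrac{\varepsilon^2}{8}\|w_{ik}\|^2,
\end{equation*}
so the leading nonconstant term is exactly $\varepsilon M_{ik}$, which is the whole point of the particular choice of $p_i$.

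It remains to convert this into the order statement $\widehat D=\widehat M$. Because the $\binom{N}{2}$ off-diagonal entries are distinct, the gap $\delta=\min |M_{ik}-M_{i'k'}|$ over distinct pairs is strictly positive, and the quantities $\|w_{ik}\|^2$ are bounded by a constant $C$ depending only on $M$. For any two pairs with $M_{ik}<M_{i'k'}$,
\begin{equation*}
\|p_{i'}-p_{k'}\|^2-\|p_i-p_k\|^2=\varepsilon\,(M_{i'k'}-M_{ik})+\tfrac{\varepsilon^2}{8}\bigl(\|w_{i'k'}\|^2-\|w_{ik}\|^2\bigr)\ge \varepsilon\delta-\tfrac{C}{8}\varepsilon^2,
\end{equation*}
which is strictly positive once $\varepsilon<8\delta/C$. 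Since $x\mapsto\sqrt x$ is increasing, the same strict inequalities hold for the distances $D_{ij}=\|p_i-p_j\|$; hence the ordering of the $D_{ij}$ coincides with that of the $M_{ij}$, i.e.\ $\widehat D=\widehat M$.

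I expect the only genuinely delicate point to be the \emph{uniformity} of the choice of $\varepsilon$: a priori the admissible threshold could depend on the pair, but finiteness of the set of pairs together with the uniform bound $C$ on $\|w_{ik}\|^2$ lets a single $\varepsilon$ work for all pairs simultaneously, which is precisely what ``for small enough $\varepsilon>0$'' encodes. The other place demanding care is the evaluation of the cross term, since it is exactly the cancellation arising from the symmetry of $M$ and the vanishing diagonal that forces the first-order coefficient to equal $M_{ik}$ rather than some other combination of entries.
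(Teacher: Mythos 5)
Your proof is correct and follows the same route as the paper: the paper's entire argument is the one-line computation $\|p_i-p_j\|^2 = \|p_i\|^2+\|p_j\|^2-2p_i\cdot p_j = 1+\varepsilon M_{ij}+\mathcal{O}(\varepsilon^2)$, which is exactly your expansion. You simply supply the details the paper leaves implicit (the exact $\varepsilon^2$ remainder, the uniform choice of $\varepsilon$ via the minimum gap $\delta$, and the reduction to an $(N-1)$-dimensional affine hull), all of which check out.
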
 

\begin{proof} 
With the choice above, $||p_i-p_j||^2 = ||p_i||^2 + ||p_j||^2 - 2 p_i \cdot p_j = 1+\varepsilon M_{ij} + \mathcal{O}(\varepsilon^2).$ 
\end{proof}

Despite this fact, when we constrain the dimension $d$ of the Euclidean space we do find matrix orderings that cannot be geometrically realized at all.  This was the basis for our examples in Figure 2a of the main text.

\subsection{Figure 2a examples from the main text}   Here we prove that the $d \geq 2$ and $d \geq 3$ matrices (reproduced in Figure 1) cannot be geometrically realized in lower dimensions.

To see why the $d \geq 2$ matrix cannot arise from an arrangement of points on a line, observe that the three smallest matrix entries are $M_{12}, M_{13},$ and $M_{14}$.  This implies the three shortest distances in a corresponding point arrangement must all involve the point $p_1$, which is not possible for points on a line.  

To see why the $d\geq 3$ matrix cannot arise from an arrangement of points on a plane, notice that the six smallest matrix entries are $M_{i\alpha}$, for $i = 1,2,3$ and $\alpha=4,5$.  This means the six smallest distances are those of the form $\Vert p_i - p_\alpha \Vert$, for $i = 1,2,3$ and $\alpha=4,5$.  Without loss of generality we can assume $\Vert p_i - p_\alpha \Vert < 1$, and all other distances are greater than one.  Now suppose the points $p_1,\ldots,p_5$ all lie in a plane.  Then $p_4, p_5 \in D(p_1) \cap D(p_2) \cap D(p_3)$, where $D(p_i)$ is a disk of radius $1$ centered at $p_i$.   Since none of the disk centers in contained in any of the other two disks, the largest distance between two points in the intersection $D(p_1) \cap D(p_2) \cap D(p_3)$ is less than one, and thus $\Vert p_4-p_5\Vert < 1$, which is a contradiction.  We conclude that the matrix cannot arise from points in the plane.  We thank Anton Petrunin for this example.

\section{Computational aspects and persistence}\label{S:computation}
  
Each graph in an order complex, $G_0 \subset G_1 \subset \cdots \subset G_{p+1},$ is a subgraph of its successor.  Intuitively, this means that the clique topology of any $G_{r}$ is closely related to the clique topology of the previous graph, $G_{r-1}$.  Exploiting this structure dramatically reduces the computational complexity of finding Betti curves (defined in section~\ref{sec:clique-top-across}), and also provides us with finer matrix invariants in the form of \emph{persistence lifetimes} of cycles.  This is achieved via {\it persistent homology}, an approach that enables homology cycles to be tracked as we move from one graph in the order complex to the next.  

\subsection{A brief history of persistent homology}

The mathematics underlying persistent homology has existed since the middle of the twentieth century, in the guise of Morse theory and spectral sequences for the homology of filtered spaces.  Its interpretation as a tool for data analysis, however, is a much more recent development. One can trace the origins of these applications to work on size theory in computer vision \cite{FrosiniLandi, CagliariFerriPozzi, FrosiniMulazzani} and alpha shapes in computational geometry \cite{Robins, Edelsbrunner, EdelsbrunnerMucke, EdelsbrunnerLetscherZomorodian}. The use of persistent homology as a tool for the study of data sets relies on two fundamental and recent developments: computabilty and robustness.

Computability arose from the \emph{persistence algorithm}, developed first for subsets of three-dimensional complexes in \cite{EdelsbrunnerLetscherZomorodian} and then extended to work with general simplicial complexes in \cite{ZomorodianCarlsson2005}. In addition to the algorithm, these papers introduced the notions of persistence diagrams and modules. Several software packages \cite{javaPlex, Dionysus, phom} have been developed based on the persistence algorithm, and recent work using discrete Morse theory has led to further improvements in speed and memory efficiency \cite{Perseus}. 

Robustness to perturbations of the underlying simplicial complexes, on the other hand, was first explicitly shown through the bottleneck stability theorem of \cite{Cohen-SteinerEdelsbrunnerHarer}. Further work has broadened this result by developing more complete theoretical tools for the comparison of persistence structures, divorcing their stability from any underlying geometry \cite{ChazalCohenSteinerGlisseGuibasOudot, ChazaldeSilvaGlisseOudot, ChazaldeSilvaOudot}. It is this interpretation of stability that most clearly applies to our study of order complexes.

Although persistent homology has only recently emerged a tool for studying features of data, it has already found a broad range of applications \cite{ChanCarlssonRabadan, SinghMemoliIshkhanovSapiroCarlssonRingach, GambleHeo, CarlssonIshkhanovdeSilvaZomorodian}.

\subsection{Persistent homology of order complexes} \label{S:persistence}
Here we present the basic ideas in persistent homology, restricted to the special case of computing clique topology for order complexes.  This means we need to apply the persistence algorithm to filtered families of clique complexes,
$$X(G_0) \subset X(G_1) \subset \cdots \subset X(G_p) \subset X(G_{p+1}),$$
where the graphs $\{G_r\}$ comprise the order complex of a symmetric matrix.
In order to track homology cycles from one clique complex to the next, we need to understand how the natural inclusion maps on the graphs, $\iota_r:G_r \into G_{r+1},$ translate to maps on the corresponding cliques, chains, and homology groups, $H_m(X(G_r))$.  This turns out to be straightforward, as there is an obvious extension to maps on clique complexes, $\iota_r: X(G_r) \into X(G_{r+1}),$ and these in turn can be extended linearly to maps between chain groups.
   
  \begin{lem} Consider the order complex  $G_0 \subset G_1 \subset \cdots \subset G_{p+1}.$
  The standard inclusion maps, $\iota_{r} : G_r \into G_{r+1}$, induce maps on homology $(\iota_{r})_m : H_m(X(G_r)) \to H_m(X(G_{r+1}))$.
  \end{lem}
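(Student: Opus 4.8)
The plan is to establish this as an instance of the standard functoriality of simplicial homology, working directly from the chain-level definitions above. First I would observe that because $G_r$ and $G_{r+1}$ share the vertex set $[N]$ and $G_r \subset G_{r+1}$ as edge sets, every clique of $G_r$ is also a clique of $G_{r+1}$; hence $X(G_r) \subseteq X(G_{r+1})$, and the graph inclusion $\iota_r$ genuinely extends to an inclusion of clique complexes $\iota_r : X(G_r) \into X(G_{r+1})$. On basis elements this sends each $c_\sigma$, for $\sigma$ an $(m+1)$-clique of $G_r$, to the element $c_\sigma \in C_m(X(G_{r+1}))$ indexed by the same clique; extending linearly yields maps $(\iota_r)_\# : C_m(X(G_r)) \to C_m(X(G_{r+1}))$ in each degree $m$.

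The key step is to check that this family of maps is a \emph{chain map}, i.e. that it commutes with the boundary operator:
\begin{equation*}
\d_m \circ (\iota_r)_\# = (\iota_r)_\# \circ \d_m.
\end{equation*}
It suffices to verify this on a basis element $c_{v_0 v_1 \dots v_m}$. The point is that the formula in Definition \ref{D:boundary} depends only on the vertex labels $v_0 < v_1 < \dots < v_m$ and their induced orientations, and these are inherited from the common ambient ordering of $[N]$; thus the signed sum $\sum_{i=0}^m (-1)^i c_{v_0 \dots \hat{v}_i \dots v_m}$ is literally the same whether computed in $X(G_r)$ or in $X(G_{r+1})$. One only needs that each omitted-vertex face is again a clique of $G_r$, which holds because subsets of cliques are cliques, so the boundary stays inside $C_{m-1}(X(G_r))$. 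Hence both sides agree.

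Once $(\iota_r)_\#$ is known to be a chain map, the induced map on homology follows from pure linear algebra, using $\d \circ \d = 0$ (Lemma \ref{L:chaincx}). Since $(\iota_r)_\#$ commutes with $\d$, it carries $\ker \d_m$ into $\ker \d_m$ (a cycle maps to a cycle) and $\im \d_{m+1}$ into $\im \d_{m+1}$ (a boundary maps to a boundary). Therefore $(\iota_r)_\#$ descends to a well-defined linear map on the quotients,
\begin{equation*}
(\iota_r)_m : \ker(\d_m)/\im(\d_{m+1}) \to \ker(\d_m)/\im(\d_{m+1}),
\end{equation*}
which is exactly the asserted map $H_m(X(G_r)) \to H_m(X(G_{r+1}))$.

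I expect the only subtle point to be the sign bookkeeping in the chain-map verification: one must confirm that the canonical orientations used to define $\d$ in the two complexes coincide. This is guaranteed here because both clique complexes draw their orientations from the single global vertex order on $[N]$ (Remark \ref{R:orientation}), so the inclusion introduces no permutation sign and the boundary formulas match term by term. The remaining passage to homology is entirely formal.
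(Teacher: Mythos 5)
Your proof is correct and follows exactly the route the paper sketches in the paragraph preceding the lemma (inclusion of graphs extends to inclusion of clique complexes, hence to chain maps, hence to maps on homology); the paper itself omits a formal proof, and your write-up supplies the standard functoriality argument with the sign/orientation issue correctly handled via the global vertex order on $[N]$. Nothing is missing.
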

  
Using these maps, we can follow individual cycles and understand their evolution as we move from one graph to the next in the order complex.  Of particular interest are the edge densities at which cycles appear and disappear (Figure 7). 

  \begin{defn}
  Let $\omega \in H_m(X(G_r))$ be a non-zero cycle which is not in the image of $\iota_{r-1}$, and let $s>r$ be the smallest integer such that $\iota_{s-1} \circ \iota_{s-2} \circ \ldots \circ \iota_r(\omega) = 0.$ 
We say that $\omega$ is \emph{born} at $r$ and \emph{dies} at $s$, and has \emph{persistence lifetime} $\ell(\omega) = s-r$.
  \end{defn}

  \begin{figure}[!h]
\begin{center}
\includegraphics[width=3.5in]{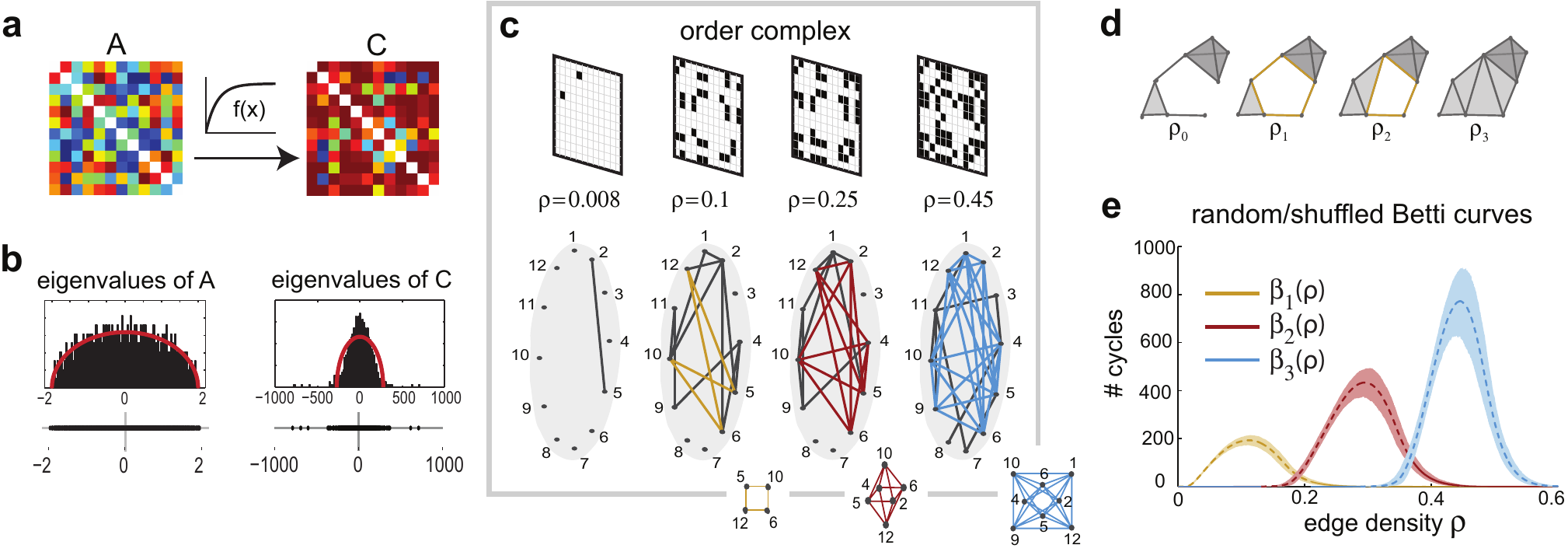}
\caption{Illustration of persistence lifetime. The $1$-cycle (yellow) appears at edge density $\rho_1$ and disappears at $\rho_3$, so it's lifetime is $\ell = \rho_3-\rho_1$.}
  \end{center}
  \end{figure}

For a given order complex, the distribution of persistence lifetimes provides a measure of matrix structure that is complementary to the Betti curves defined in section~\ref{sec:clique-top-across}.

\subsection{Persistence lifetimes of random and geometric order complexes}

Recall that there is a sharp qualitative difference in the Betti curves of random order complexes and those of geometric order complexes (Figure 6). These differences are also reflected in the distributions of their persistence lifetimes.  While random complexes have relatively broad distributions (Figure 8a), the geometric complexes are heavily weighted toward shorter lifetimes (Figure 8b). The shapes of these distributions are a direct consequence of the order in which edges are added in the order complex. 

  \begin{figure}[!h]
\begin{center}
\includegraphics[width=6in]{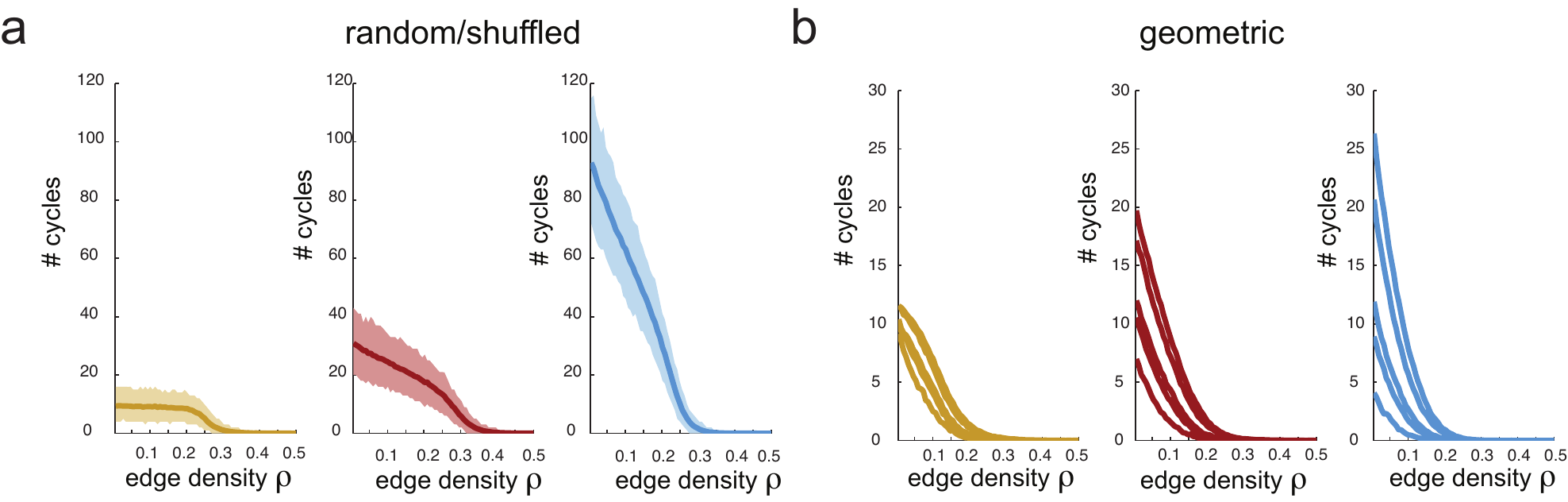}
\caption{Persistence lifetimes for random and geometric matrices.  (a) N = 100, and mean lifetime distributions for $1$-cycles (yellow), $2$-cycles (red), and $3$-cycles (blue) are displayed with bold lines, while shading indicates 99.5\% confidence intervals. (b) N=100, and average lifetime distributions are displayed for dimensions $d=10, 50, 100, 1000, 10000,$ in increasing order (i.e., higher curves correspond to larger dimensions).}
  \end{center}
  \end{figure}

The qualitative differences in these distributions can be understood by 
thinking about dependencies in edge orderings in the order complex. 
Minimal cycles, represented by cross-polytopes (Figure 5), are known to constitute 
the large majority of cycles in random order complexes \cite{KahleMeckes2013}, and 
can thus be used to understand the shape of the distribution. Such a cycle's 
lifetime is governed by the density at which the first additional edge appears, since the extra edge
destroys the cycle by creating new cliques.
Since the ordering of the edges is completely random, the lifetimes will be broadly 
distributed. In contrast, geometric order complexes are constrained by
triangle inequalities (and higher-dimensional analogues); these produce dependencies in the edge ordering 
which imposes an upper limit on the lifetime of small cycles, like the cross-polytopes.\footnote{This is true statistically.  It is, of course, possible for two distances to be close in absolute terms and still be separated by many edges in the order complex, but this is rare enough that the intuition about Betti curves still holds.}  Persistence lifetimes in geometric complexes are thus concentrated at short lifetimes.

\subsection{CliqueTop software}\label{S:software}

To compute clique topology for symmetric matrices, we developed the CliqueTop Matlab package. This software is maintained by Chad Giusti, and is available on GitHub at \url{https://github.com/nebneuron/clique-top}.  At the time of this writing, CliqueTop makes use of one other package: Perseus \cite{Perseus}, by Vidit Nanda.  Perseus provides an implementation of the 
persistence algorithm, and is available at \url{http://www.sas.upenn.edu/~vnanda/perseus/index.html}.  
Previous versions of CliqueTop also used the Cliquer software package \cite{Cliquer}.

\bibliography{supp_text_references}

\newcommand{\etalchar}[1]{$^{#1}$}
\begin{thebibliography}{CCSG{\etalchar{+}}09}

\bibitem[ACC13]{AdcockCarlssonCarlsson}
Aaron Adcock, Erik Carlsson, and Gunnar Carlsson.
\newblock The ring of algebraic functions on persistence bar codes, 2013.

\bibitem[CCR13a]{ChanCarlssonRabadan2013}
Joseph~Minhow Chan, Gunnar Carlsson, and Raul Rabadan.
\newblock Topology of viral evolution.
\newblock {\em Proceedings of the National Academy of Sciences}, 2013.

\bibitem[CCR13b]{ChanCarlssonRabadan}
Joseph~Minhow Chan, Gunnar Carlsson, and Raul Rabadan.
\newblock Topology of viral evolution.
\newblock {\em Proc. Natl. Acad. Sci. USA}, 110(46):18566--18571, 2013.

\bibitem[CCSG{\etalchar{+}}09]{ChazalCohenSteinerGlisseGuibasOudot}
F.~Chazal, D.~Cohen-Steiner, M.~Glisse, L.~J. Guibas, and S.~Y. Oudot.
\newblock Proximity of persistence modules and their diagrams.
\newblock In {\em Proc. 25th Annu. Symposium on Computational Geometry}, pages
  237--246, 2009.

\bibitem[CdSGO12]{ChazaldeSilvaGlisseOudot}
Frederic Chazal, Vin de~Silva, Marc Glisse, and Steve Oudot.
\newblock The structure and stability of persistence modules, 2012.

\bibitem[CdSO12]{ChazaldeSilvaOudot}
Frederic Chazal, Vin de~Silva, and Steve Oudot.
\newblock Persistence stability for geometric complexes, 2012.

\bibitem[CFP01]{CagliariFerriPozzi}
Francesca Cagliari, Massimo Ferri, and Paola Pozzi.
\newblock Size functions from a categorical viewpoint.
\newblock {\em Acta Appl. Math.}, 67(3):225--235, 2001.

\bibitem[CGYW14]{Chen2014}
Zhe Chen, Stephen~N. Gomperts, Jun Yamamoto, and Matthew~A. Wilson.
\newblock Neural representation of spatial topology in the rodent hippocampus.
\newblock {\em Neural computation}, 26(1):1--39, Jan 2014.

\bibitem[CI08]{gap}
Carina Curto and Vladimir Itskov.
\newblock Cell groups reveal structure of stimulus space.
\newblock {\em PLoS Comput Biol}, 4(10), 2008.

\bibitem[CIdSZ08]{CarlssonIshkhanovdeSilvaZomorodian}
Gunnar Carlsson, Tigran Ishkhanov, Vin de~Silva, and Afra Zomorodian.
\newblock On the local behavior of spaces of natural images.
\newblock {\em Int. J. Comput. Vision}, 76(1):1--12, January 2008.

\bibitem[CSEH07]{Cohen-SteinerEdelsbrunnerHarer}
David Cohen-Steiner, Herbert Edelsbrunner, and John Harer.
\newblock Stability of persistence diagrams.
\newblock {\em Discrete Comput. Geom.}, 37(1):103--120, 2007.

\bibitem[CZCG04]{CollinsZCG04}
Anne~D. Collins, Afra Zomorodian, Gunnar~E. Carlsson, and Leonidas~J. Guibas.
\newblock A barcode shape descriptor for curve point cloud data.
\newblock {\em Computers \& Graphics}, 28(6):881--894, 2004.

\bibitem[DMFC12]{DabaghianMemoliFrankCarlsson2012}
Y.~Dabaghian, F.~Memoli, L.~Frank, and G.~Carlsson.
\newblock A topological paradigm for hippocampal spatial map formation using
  persistent homology.
\newblock {\em PLoS Comput Biol}, 8(8), 08 2012.

\bibitem[Ede95]{Edelsbrunner}
H.~Edelsbrunner.
\newblock The union of balls and its dual shape.
\newblock {\em Discrete Comput. Geom.}, 13(3-4):415--440, 1995.

\bibitem[ELZ00]{EdelsbrunnerLetscherZomorodian}
Herbert Edelsbrunner, David Letscher, and Afra Zomorodian.
\newblock Topological persistence and simplification.
\newblock In {\em 41st {A}nnual {S}ymposium on {F}oundations of {C}omputer
  {S}cience ({R}edondo {B}each, {CA}, 2000)}, pages 454--463. IEEE Comput. Soc.
  Press, Los Alamitos, CA, 2000.

\bibitem[EM94]{EdelsbrunnerMucke}
H.~Edelsbrunner and E.~P. Mucke.
\newblock Three-dimensional alpha shapes.
\newblock {\em ACM Trans. Graphics}, 13:43--72, 1994.

\bibitem[FL11]{FrosiniLandi}
Patrizio Frosini and Claudia Landi.
\newblock Persistent {B}etti numbers for a noise tolerant shape-based approach
  to image retrieval.
\newblock In {\em Computer analysis of images and patterns. {P}art {I}}, volume
  6854 of {\em Lecture Notes in Comput. Sci.}, pages 294--301. Springer,
  Heidelberg, 2011.

\bibitem[FM99]{FrosiniMulazzani}
Patrizio Frosini and Michele Mulazzani.
\newblock Size homotopy groups for computation of natural size distances.
\newblock {\em Bull. Belg. Math. Soc. Simon Stevin}, 6(3):455--464, 1999.

\bibitem[GH10]{GambleHeo}
Jennifer Gamble and Giseon Heo.
\newblock Exploring uses of persistent homology for statistical analysis of
  landmark-based shape data.
\newblock {\em Journal of Multivariate Analysis}, 101(9):2184 -- 2199, 2010.

\bibitem[Hat02]{Hatcher}
Allen Hatcher.
\newblock {\em Algebraic topology}.
\newblock Cambridge University Press, Cambridge, 2002.

\bibitem[Kah09]{Kahle2009}
Matthew Kahle.
\newblock Topology of random clique complexes.
\newblock {\em Discrete Math.}, 309(6):1658--1671, 2009.

\bibitem[KM13]{KahleMeckes2013}
Matthew Kahle and Elizabeth Meckes.
\newblock Limit theorems for {B}etti numbers of random simplicial complexes.
\newblock {\em Homology Homotopy Appl.}, 15(1):343--374, 2013.

\bibitem[Mor14]{Dionysus}
Dmitriy Morozov.
\newblock Dionysis, c++ library for computing persistent homology.
\newblock \url{http://www.mrzv.org/software/dionysus/}, 2011--2014.

\bibitem[Nan14]{Perseus}
Vidit Nanda.
\newblock Perseus, the persistent homology software.
\newblock \url{http://www.sas.upenn.edu/~vnanda/perseus} Accessed 06/14,
  2012--2014.

\bibitem[NLC11]{NicolauLevineCarlsson2011}
Monica Nicolau, Arnold~J. Levine, and Gunnar Carlsson.
\newblock Topology based data analysis identifies a subgroup of breast cancers
  with a unique mutational profile and excellent survival.
\newblock {\em Proceedings of the National Academy of Sciences},
  108(17):7265--7270, 2011.

\bibitem[NO02]{Cliquer}
Sampo Niskanen and Patric \"Osterg\r{a}rd.
\newblock Cliquer.
\newblock \url{http://users.tkk.fi/pat/cliquer.html}, 2002.

\bibitem[Rob99]{Robins}
V.~Robins.
\newblock Towards computing homology from finite approximations.
\newblock In {\em Proceedings of the 14th {S}ummer {C}onference on {G}eneral
  {T}opology and its {A}pplications ({B}rookville, {NY}, 1999)}, volume~24,
  pages 503--532 (2001), 1999.

\bibitem[SMI{\etalchar{+}}08a]{SinghMemoliIshkhanovSapiroCarlssonRingach2008}
Gurjeet Singh, Facundo Memoli, Tigran Ishkhanov, Guillermo Sapiro, Gunnar
  Carlsson, and Dario~L Ringach.
\newblock Topological analysis of population activity in visual cortex.
\newblock {\em Journal of vision}, 8(8):11, 2008.

\bibitem[SMI{\etalchar{+}}08b]{SinghMemoliIshkhanovSapiroCarlssonRingach}
Gurjeet Singh, Facundo Memoli, Tigran Ishkhanov, Guillermo Sapiro, Gunnar
  Carlsson, and Dario~L Ringach.
\newblock Topological analysis of population activity in visual cortex.
\newblock {\em Journal of vision}, 8(8):11, 2008.

\bibitem[Tau11]{phom}
Andrew Tausz.
\newblock phom: Persistent homology in r.
\newblock \url{http://cran.r-project.org}, 2011.

\bibitem[TVJA11]{javaPlex}
Andrew Tausz, Mikael Vejdemo-Johansson, and Henry Adams.
\newblock Javaplex: A research software package for persistent (co)homology.
\newblock \url{http://javaplex.github.io/javaplex/}, 2011.

\bibitem[ZC05]{ZomorodianCarlsson2005}
Afra Zomorodian and Gunnar Carlsson.
\newblock Computing persistent homology.
\newblock {\em Discrete Comput. Geom.}, 33(2):249--274, 2005.

\end{thebibliography}

\bibliographystyle{alpha}

\end{document}